\newtheorem{theorem}{Theorem}
\newenvironment{dedication}
  { \itshape             
 \raggedleft
  }
  {\par 
   
  }
\newcommand{\R}{\mathbb{R}}
\newcommand{\C}{\mathbb{C}}
\newcommand{\g}{\mathfrak{g}}
\newtheorem{lemma}{Lemma}
\newtheorem{proposition}{Proposition}
\begin{document}
\title{Strong contraction, the mirabolic group and the Kirillov conjecture}

\author{E M Subag$^1$ and E M Baruch$^2$}
\address{$^1$ Department of Mathematics, Penn State University, University Park, PA 16802, USA.\\
$^2$ Department of Mathematics, Technion-Israel Institute of Technology,
Haifa 32000, Israel.}
\ead{$^1$eus25@psu.edu, $^2$embaruch@math.technion.ac.il}

\begin{abstract}
We lift any (infinitesimal)  unitary irreducible representation of $GL_n(\R)$ to a family of representations that strongly contracts to a certain type of  (infinitesimal)  unitary irreducible representations of $\mathbb{R}^n\rtimes {M}_n$, with $M_n$ being the mirabolic subgroup of $GL_n(\R)$. For the case of $n=2$ we obtain the full unitary dual of   $\mathbb{R}^2\rtimes {M}_2$ as a strong contraction. 
We demonstrate the role of the Kirillov conjecture and Kirillov model for these contractions. 
\end{abstract}

\begin{dedication}

Dedicated to I.E. Segal (1918-1998)  in commemoration of the centenary of his birth.
\end{dedication}

\section{Introduction}
Contraction of Lie algebras first appeared in the work of Segal \cite{Segal51} and  {I}n\"{o}n\"{u} and Wigner \cite{Inonu-Wigner53}. Over the years, many applications to mathematical physics where found e.g., see \cite{Saletan61,PhysRevA.6.2211,MR1275599,MR720801,BI,MR3687831} and references therein. Several years ago, the authors, together with  their collaborators, introduced the notion of \textit{strong contraction} for representations of Lie algebras \cite{Subag12}. This involves a new setup for contraction  of representations which utilize  representations that are realized on certain  spaces of functions.  Many contractions, and most notably {I}n\"{o}n\"{u}-Wigner contractions, are being preformed with respect to a subgroup (or sub-Lie-algebra). Here, we shall focus on  the  contraction of   $\mathfrak{gl}_n(\mathbb{R})$ with respect to the Lie algebra $\mathfrak{m}_n$ consisting of all   $n\times n$ real matrices having their last row  equal to zero. 
This is the Lie algebra of the mirabolic  group $M_n$, consisting of all invertible $n\times n$ matrices having their last row given by   $(0,0,...,0,1)$. 
The Lie algebra  obtained via this contraction is the semidirect product  $\mathbb{R}^{n}\rtimes \mathfrak{m}_n$. It is the Lie algebra of  $\mathbb{R}^{n}\rtimes {M}_n$. The mirabolic group $M_n$  has the following remarkable property: any  unitary irreducible representation of $GL_n(\R)$ restricted to $M_n$  remains irreducible. This is known as the \textit{Kirillov conjecture}, proven in the $\mathfrak{p}$-adic case in \cite{MR748505} and for real groups in \cite{MR1999922}. An important feature of the irreducible representations of $GL_n(\R)$ is the existence of their Kirillov model which is  a   realization on a certain space  of functions on $GL_{n-1}(\R)$ \cite{MR0139691}. The Kirillov model  arises in various contexts in automorphic representation theory  and representation theory of finite groups of Lie type, e.g., see\cite{MR2154720,MR1063847}. Recently, an explicit formulas for the space of $K$-finite vector in the  Kirillov model of $GL_2(\mathbb{R})$ was found \cite{MR3055523}. This  was used  in \cite{Subag12} to obtain the skew-Hermitian irreducible  representations of the Poincar\'e Lie algebra $\mathfrak{iso}(1,1)$ as  strong contractions of representations of $\mathfrak{sl}_2(\mathbb{R})$.

The first purpose of this paper is  to demonstrate how the Kirillov conjecture  and the Kirillov   model are useful for strong contractions in the context of  contraction of $\mathfrak{gl}_n(\R)$ with respect to $\mathfrak{m}_n$.  In particular we shall use the Kirillov conjecture to prove the following result.\\

\noindent \textbf{Theorem 1.}
\textit{Let  $\pi:GL_n(\R)\longrightarrow \mathcal{U}(\mathcal{H})$ be a unitary irreducible representation, realized on a Hilbert space of functions $\mathcal{H}$. Let $\pi_0:\mathbb{R}^{n}\rtimes {M}_n\longrightarrow \mathcal{U}(\mathcal{H})$  be the unitary irreducible representation given by the restriction of $\pi$ to $M_n$ extended  trivially to  $\mathbb{R}^{n}\rtimes {M}_n$. Let $d\pi$ and $d\pi_0$,  be the representations of  Lie algebras  associated with  $\pi$  and $\pi_0$ (respectively) on the space of smooth vectors of $\mathcal{H}$.
Then the constant family of representations  $\{d\pi_{\epsilon}=d\pi \}_{\epsilon\in \R^*}$ 
  strongly contract to  $d\pi_{0}$.
}\\

 The second purpose  is to obtain the full unitary dual of   $\mathbb{R}^{2}\rtimes M_2$ as a strong contraction. More precisely we prove the following result.\\

\noindent \textbf{Theorem 2.}
\textit{For any unitary irreducible representation $\pi_0$  of $\mathbb{R}^{2}\rtimes M_2$ the following hold.  
\begin{enumerate}
\item There is  a realization of $\pi_0$ on  a Hilbert space of functions  $\mathcal{H}$.
\item There is a dense $(\mathbb{R}^{2}\rtimes \mathfrak{m}_2)$-invariant subspace $\mathcal{H}^{\infty}$ of $\mathcal{H}$. 
\item There is a family  of representations $\{d\pi_{\epsilon}:\g_2\longrightarrow \operatorname{End}(\mathcal{H}^{\infty}) \}_{\epsilon\in \R\neq 0}$,   or a sequence of representations $\{d\pi_{\epsilon_n}:\mathfrak{gl}_2(\R)\longrightarrow \operatorname{End}(\mathcal{H}^{\infty}) \}_{n\in \mathbb{N}}$
that  strongly contracts to  $d\pi_{0}:\mathbb{R}^{2}\rtimes \mathfrak{m}_2\longrightarrow \operatorname{End}(\mathcal{H}^{\infty})$.
\end{enumerate}}

\section{Contractions}
In this section we  introduce  notations and review some generalities on {I}n\"{o}n\"{u}-Wigner contraction \cite{Inonu-Wigner53}. We  spell out an {I}n\"{o}n\"{u}-Wigner contraction \cite{Inonu-Wigner53} of $\mathfrak{gl}_n(\mathbb{R})$ with respect to the Lie algebra of the mirabolic group $M_n$. Then we recall the notion  of  strong contraction of  representations of Lie algebras. 
\subsection{Contraction of Lie algebras}\label{sec}
Given a real Lie algebra $\g=(V,[\_,\_])$ (with $V$ the underlying vector space ,and $[\_,\_]$ the Lie brackets of $\g$) and a decomposition  $V=\mathfrak{k}\oplus \mathfrak{s}$ with $\mathfrak{k}$  being a subalgebra and $\mathfrak{p}$ a vector space complement to $\mathfrak{k}$, there is a corresponding   {I}n\"{o}n\"{u}-Wigner contraction; For every nonzero $\epsilon \in \R$ we have a linear invertible map $t_{\epsilon}:V\longrightarrow V$ given by $t_{\epsilon}(X_{\mathfrak{k}}+X_{\mathfrak{s}})=X_{\mathfrak{k}}+\epsilon X_{\mathfrak{s}}$,  for  $X_{\mathfrak{k}}\in \mathfrak{k}$ and  $X_{\mathfrak{s}}\in \mathfrak{s}$. For  $X,Y\in V$, the formula  $[X,Y]_{\epsilon}:=t_{\epsilon}^{-1}[t_{\epsilon}(X),t_{\epsilon}(Y)]$ defines  Lie brackets on $V$. We denote the  corresponding Lie algebra by  $\g_{\epsilon}=(V,[\_,\_]_{\epsilon})$.  
Moreover, for every $X,Y\in V$,
\[ [X,Y]_{0}:=\lim_{\epsilon \longrightarrow 0}t_{\epsilon}^{-1}[t_{\epsilon}(X),t_{\epsilon}(Y)] \] 
converges and defines Lie brackets on $V$. The obtained Lie algebra $\g_{0}=(V,[\_,\_]_{0})$ is called the contraction of $\g=\g_{1}$. The Lie algebra $\g_0$ is a semidirect product of $\mathfrak{k}$ and the  abelian ideal $\mathfrak{s}$.   This contraction is  denoted by $\g \stackrel{t_{\epsilon}}{\rightarrow}\g_0$. 

\subsection{The case of $\mathfrak{m}_n\subset \mathfrak{gl}_n(\mathbb{R})$ }
Consider the Lie algebra of $n\times n$ real matrices, $\mathfrak{gl}_n(\R)$. We denote its underlying vector space by $V_n$, and we shall use the standard basis $\{e_{i,j} | 1 \leq i,j \leq n\}$ of $V_n$ to define a subalgebra  and a vector space complement. As the subalgebra $\mathfrak{k}$  we take $\mathfrak{m}_n=\operatorname{span}_{\mathbb{R}}\{e_{i,j}| 1\leq i\leq n-1, 1\leq j \leq n \}$ and choose the complement $
\mathfrak{s}_n:=\operatorname{span}_{\mathbb{R}}\{e_{nj}| 1\leq j \leq n \}$.
For $\epsilon \neq 0$, the corresponding contraction maps $t_{\epsilon}:V_n\longrightarrow V_n$ are given by 
\begin{eqnarray}\nonumber
&& t_{\epsilon}(e_{ij})=\begin{cases}
e_{ij} & , i\neq n\\
\epsilon e_{ij}  &, i=n.
  \end{cases}
\end{eqnarray} 
For every $\epsilon \in \R$, the Lie brackets $[\_,\_]_{\epsilon}$ on $V_n$ are  explicitly given by 
\begin{eqnarray}\nonumber
&&[e_{ij},e_{kl}]_{\epsilon}=\begin{cases}
[e_{ij},e_{kl}]=\delta_{jk}e_{il}-\delta_{li}e_{kj} &, i\neq n, j\neq n \\
[e_{ij},e_{kl}]=\delta_{jk}e_{il}-\epsilon \delta_{li}e_{kj} &, i=n, j\neq n\\
[e_{ij},e_{kl}]=\epsilon\delta_{jk}e_{il}- \delta_{li}e_{kj} &, i\neq n, j= n\\
\epsilon[e_{ij},e_{kl}]=\epsilon(\delta_{jk}e_{il}-\delta_{li}e_{kj}) &, i=k=n.
\end{cases} 
\end{eqnarray}
The group $\mathbb{R}^{n}\rtimes M_n$ with product given by 
\[(v,A)(u,B)=(v+(A^{-1})^Tu,AB) \] 
for $(v,A),(u,B) \in \mathbb{R}^{n}\times M_n$,  is a Lie group containing $M_n$ as a subgroup and having $(\mathfrak{gl}_n(\R))_0$ as its Lie algebra.

\subsection{Strong contractions}
Below we recall the definition of \textit{strong contraction} in the special case of a common underlying inner product space of functions, for all representations that take part in the contraction procedure. This will be enough for the purposes of this paper. For the general definition see \cite{MR3687831}. 

Keeping the previous notation, suppose that $t_{\epsilon}:V\longrightarrow V$ is a family  of linear invertible  maps  that realizes a contraction from $\g=\mathfrak{g}_{1}=(V,[\_\_,\_\_]_1)$ to  $\mathfrak{g}_{0}=(V,[\_\_,\_\_]_0)$. Let $X$ be a topological space, and $\mu$ a posItive Borel measure on $X$.  Let $W$ be a subspace  of $L^2(X,d\mu)$.  Let  $\pi_{0}:\g_0\longrightarrow \operatorname{End}(W)$ be a representation of  $\g_0$ and   for every $\epsilon \neq 0 $,  $\pi_{\epsilon}:\g\longrightarrow \operatorname{End}(W)$ a representation  of $\g$.
The representation $\pi_0$ is a \textit{strong contraction} of the family $\{\pi_{\epsilon} \}_{\epsilon \neq 0}$ if the following hold.
\begin{enumerate}
\item For every $f\in W$, every $Y \in V$  and every $x\in X$, 
\[\lim_{\epsilon\rightarrow 0}\left( \pi_{\epsilon}(t_{\epsilon}Y)f\right) (x)=\left( \pi_{0}(Y)f\right) (x).\]
\item  For every $f\in W$ and every $Y \in V$,   
\[\lim_{\epsilon\rightarrow 0}\|\left( \pi_{\epsilon}(t_{\epsilon}Y)f\right)- \left( \pi_{0}(Y)f\right) \|=0.\]
\end{enumerate}  
Similarly, a  sequence of representations  $\pi_{n}:\g\longrightarrow \operatorname{End}(W)$ strongly contract to  $\pi_{0}:\g_0\longrightarrow \operatorname{End}(W)$ if theres is a sequence of real numbers $\epsilon_n$ converging to zero such that 
\begin{enumerate}
\item For every $f\in W$, every $Y \in V$  and every $x\in X$, 
\[\lim_{n \rightarrow \infty}\left( \pi_{n}(t_{\epsilon_n}Y)f\right) (x)=\left( \pi_{0}(Y)f\right) (x).\]
\item  For every $f\in W$ and every $Y \in V$,   
\[\lim_{n\rightarrow \infty}\|\left( \pi_{n}(t_{\epsilon_n}Y)f\right)- \left( \pi_{0}(Y)f\right) \|=0.\]
\end{enumerate}

\section{Strong contraction of constant families of unitary irreducible representations of $GL_n(\R)$ }
This section deals with strong contractions of constant families of unitary irreducible representations of $GL_n(\R)$. We shall start our discussion proving a simple lemma which is applicable in  more general context.

Let  $\g \stackrel{t_{\epsilon}}{\rightarrow}\g_0$ be a contraction of Lie algebras  with respect to a decomposition  $V=\mathfrak{k}\oplus \mathfrak{s}$ as above (recall that $V$ is the underlying vector space of $\g$). Let $G$ and $G_0$ be Lie groups with Lie algebras $\g$ and $\g_0$ respectively, and $K$ a common subgroup of $G$ and $G_0$ with Lie algebra $\mathfrak{k}$. Further assume that $G_0=K\ltimes S$ with $S$ an abelian vector group having  $\mathfrak{s}$ as its Lie algebra.  
With any  unitary  representation $\pi$ of $G$ on a Hilbert space $\mathcal{H}$ we can associate a unitary representation $\pi|_{K}$ of $K$ on  $\mathcal{H}$, simply by restriction.   With $\pi|_{K}$ we associate a unique  unitary  representation $\pi_0$ of  $G_0$ on $\mathcal{H}$ such that    $\pi_0|_{K}=\pi|_{K}$ and $\pi_0(s)$ is the identity operator for any $s\in S$.    In this section  we let $\mathcal{H}^{\infty}$ be the space of  $G$-smooth vectors in $\mathcal{H}$. It is known to be dense (e.g., see \cite[Thm. 3.15]{Knapp}) and via differentiation it carries representations $d\pi$ and $d\pi_0$ of  $\g$ and  $\g_0$, respectively.

\begin{lemma}\label{lem1}
Let  $\pi:G\longrightarrow \mathcal{U}(\mathcal{H})$ be a unitary irreducible representation, realized on a Hilbert space of functions $\mathcal{H}$.
Then  the constant family of representations  $\{d\pi_{\epsilon}:\g\longrightarrow \operatorname{End}(\mathcal{H}^{\infty}) \}_{\epsilon\in \R\neq 0}$ 
with $d\pi_{\epsilon}=d\pi$,  strongly contracts to  $d\pi_{0}:\g_0\longrightarrow \operatorname{End}(\mathcal{H}^{\infty})$.
\end{lemma}
The proof  is elementary and is  included just for completeness.
\begin{proof}
We can assume that $\mathcal{H}$ is a subspace of $L^2(X)$ for some measurable space $X$ with a positive Borel measure $\mu$. For every $Y\in \mathfrak{k}$, every  function $f\in \mathcal{H}$  and every $x\in X$, we have
\begin{eqnarray}\nonumber
&&\lim_{\epsilon\rightarrow 0}\left( d\pi_{\epsilon}(t_{\epsilon}Y)f\right) (x)=\left( d\pi(Y)f\right) (x)=\left( d\pi_0(Y)f\right) (x),\\ \nonumber
&&\lim_{\epsilon\rightarrow 0}\|\left( d\pi_{\epsilon}(t_{\epsilon}Y)f\right)- \left( d\pi_{0}(Y)f\right) \|=0.
\end{eqnarray} 
For every $Y\in \mathfrak{s}$, every  function $f\in \mathcal{H}$  and every $x\in X$, we have
\begin{eqnarray}\nonumber
&&\lim_{\epsilon\rightarrow 0}\left( d\pi_{\epsilon}(t_{\epsilon}Y)f\right) (x)=\lim_{\epsilon\rightarrow 0}\epsilon\left( d\pi(Y)f\right) (x)=0=\left( d\pi_0(Y)f\right) (x),\\ \nonumber
&&\lim_{\epsilon\rightarrow 0}\|\left( d\pi_{\epsilon}(t_{\epsilon}Y)f\right)- \left( d\pi_{0}(Y)f\right) \|=\lim_{\epsilon\rightarrow 0}\|\left( \epsilon d\pi(Y)f\right)-0 \|=\lim_{\epsilon\rightarrow 0}|\epsilon| \|\left(  d\pi(Y)f\right) \|=0.
\end{eqnarray} 
\end{proof}

We keep the above notation and let $G=GL_n(R)$, $K=M_n$ and  $S=\R^n$. Then the Kirillov conjecture implies that $\pi|_{M_n}$ (and hence also  $\pi_0$) is unitary irreducible for any unitary irreducible $\pi$. In general, that is for other groups,  $\pi_0$ is typically reducible.  Lemma \ref{lem1} implies the following result for the above mentioned contraction with respect to the mirabolic subgroup.

\begin{theorem}\label{th1}
Let  $\pi:GL_n(\R)\longrightarrow \mathcal{U}(\mathcal{H})$ be a unitary irreducible representation, realized on a Hilbert space of functions $\mathcal{H}$.
Then $\pi_0$ is unitary irreducible and  the constant family of representations  $\{d\pi_{\epsilon}:\mathfrak{gl}_n(\R)\longrightarrow \operatorname{End}(\mathcal{H}^{\infty}) \}_{\epsilon\in \R\neq 0}$ 
with $d\pi_{\epsilon}=d\pi$,  strongly contracts to  $d\pi_{0}:\mathbb{R}^{n}\rtimes \mathfrak{m}_n\longrightarrow \operatorname{End}(\mathcal{H}^{\infty})$.
\end{theorem}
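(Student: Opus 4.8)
The plan is to treat this statement as a direct specialization of Lemma~\ref{lem1} combined with the Kirillov conjecture, so that essentially no new limiting argument is required beyond matching the abstract hypotheses with the concrete mirabolic data of Section 2.2.

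First I would check that the general setup preceding Lemma~\ref{lem1} is met. Take $G=GL_n(\R)$, $\g=\mathfrak{gl}_n(\R)$, $K=M_n$ with Lie algebra $\mathfrak{k}=\mathfrak{m}_n$, and $S=\R^n$ with Lie algebra $\mathfrak{s}=\mathfrak{s}_n$. The decomposition $V_n=\mathfrak{m}_n\oplus\mathfrak{s}_n$ and the contraction maps $t_{\epsilon}$ are exactly those introduced in Section 2.2, and the group law on $\R^n\rtimes M_n$ recorded there exhibits $G_0=\R^n\rtimes M_n$ as a semidirect product $K\ltimes S$ with $S$ an abelian vector group having $\mathfrak{s}_n$ as its Lie algebra and $\g_0=\R^n\rtimes\mathfrak{m}_n$ as its Lie algebra. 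Since $\mathcal{H}$ is realized as a space of functions, all hypotheses of Lemma~\ref{lem1} hold, and the strong contraction of the constant family $\{d\pi_{\epsilon}=d\pi\}$ to $d\pi_0$ follows verbatim from that lemma.

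It then remains to establish the irreducibility of $\pi_0$. Here I would invoke the Kirillov conjecture, proven for real groups in \cite{MR1999922}: since $\pi$ is unitary irreducible, its restriction $\pi|_{M_n}$ to the mirabolic subgroup remains irreducible. By construction $\pi_0|_{M_n}=\pi|_{M_n}$ while $\pi_0$ acts as the identity on $S=\R^n$. Consequently any closed $\pi_0(\R^n\rtimes M_n)$-invariant subspace of $\mathcal{H}$ is in particular $\pi_0(M_n)$-invariant, hence $\pi|_{M_n}$-invariant, and so by irreducibility of $\pi|_{M_n}$ it is either $\{0\}$ or $\mathcal{H}$. Therefore $\pi_0$ is unitary irreducible.

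The point worth emphasizing is where the genuine content lies. The strong-contraction assertion is elementary and is already packaged in Lemma~\ref{lem1}; the only nontrivial input is the Kirillov conjecture, which guarantees that the restricted representation $\pi|_{M_n}$ is irreducible. For a generic contraction with respect to a subgroup one would expect the trivially extended limit $\pi_0$ to be reducible, and it is precisely the irreducibility-preserving property of the mirabolic subgroup that makes $d\pi_0$ irreducible. Thus the main ``obstacle'' is not in the limiting argument at all but in having the Kirillov conjecture available; granting it, the proof reduces to the bookkeeping verification that the abstract lemma applies to this concrete pair $(GL_n(\R),M_n)$.
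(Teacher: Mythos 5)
Your proposal is correct and follows essentially the same route as the paper: the authors likewise deduce the strong-contraction statement by specializing Lemma~\ref{lem1} to $G=GL_n(\R)$, $K=M_n$, $S=\R^n$, and obtain irreducibility of $\pi_0$ from the Kirillov conjecture (proved for real groups in \cite{MR1999922}). Your explicit verification that invariant subspaces of $\pi_0$ are $\pi|_{M_n}$-invariant is a minor elaboration of the paper's remark that $\pi|_{M_n}$ irreducible implies $\pi_0$ irreducible, but it is the same argument.
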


\section{The unitary dual of  $\mathbb{R}^{2}\rtimes M_2$}\label{s4}
In this section, using the Mackey machine \cite{MR0031489,MR0044536,MR0396826}, we describe the unitary dual of $\mathbb{R}^{2}\rtimes M_2$. We shall explicitly write the associated representation of $\g_0=\mathbb{R}^{2}\rtimes \mathfrak{m}_2$ on a corresponding dense subspace of the space of  smooth vectors. These realizations are  used in section \ref{Chap4}.

The character group $\widehat{\mathbb{R}^{2}}$ consists of all  
functions of the form  \begin{eqnarray} \nonumber
&&\chi_{u}:\R^2\longrightarrow \C \\ \nonumber
&& \chi_{u}\left(v\right)=e^{i\langle u,v\rangle} ,
\end{eqnarray}
with $u,v\in \R^2$. The mirabolic group  $M_2$ acts on $\widehat{\mathbb{R}^{2}}$ by $A\cdot \chi_{u}=\chi_{Au}$. Unitary irreducible representations of $\mathbb{R}^{2}\rtimes M_2$ are parameterized by an orbit of some $\chi_{u}$ in  $\widehat{\mathbb{R}^{2}}$  and a unitary irreducible representation of the stabilizer of $\chi_{u}$. Below we give an exhaustive list of these representations up to equivalence. \begin{enumerate}
\item \textbf{The orbit of the  trivial character $\chi_{(0,0)}$}. In this case we obtain unitary irreducible representations that are trivial on $\R^2$. Such a representation is given by  a unitary irreducible representation of $M_2$. Explicitly,  we have exactly the following cases.
\begin{enumerate}
	\item For  $\lambda \in \mathbb{R}$ and $\sigma\in \{0,1\}$, the representation $\eta_{(0,0)}^{\lambda,\sigma}:	M_2\longrightarrow GL(\mathbb{C})\cong \mathbb{C}^*$   given by \[\eta_{(0,0)}^{\lambda,\sigma}\left(\begin{matrix}
a& b   \\
0 &  1   \end{matrix}\right) =\operatorname{sgn}(a)^{\sigma}|a|^{i\lambda}\mathbb{I}.\]
The associated representation $d\eta_{(0,0)}^{\lambda,\sigma}$ of $\g_0$ on $\C$ is given by 
\begin{eqnarray}\nonumber
&& e_{11}\longmapsto i\lambda,\hspace{2mm} e_{12} \longmapsto 0,\hspace{2mm} e_{21}\longmapsto 0, \hspace{2mm} e_{22} \longmapsto 0. 
\end{eqnarray}  
	\item The   representation $\eta_{(0,0)}:M_2 \longrightarrow \mathcal{U}(L^2(\mathbb{R^*},\frac{dx}{|x|}))$  given by 
	\[\left(\eta_{(0,0)}\left(\begin{matrix}
a& b   \\
0 &  1   \end{matrix}\right) f\right)(x)=e^{ibx} f(a x).\]
The associated representation $d\eta_{(0,0)}$ of $\g_0$ on $ C^{\infty}_c(\mathbb{R^*},\frac{dx}{|x|})$, the inner product  space of smooth compactly supported functions in $L^2(\mathbb{R^*},\frac{dx}{|x|})$, is given by 
\begin{eqnarray}\nonumber
&& e_{11}\longmapsto x\partial_x,\hspace{2mm} e_{12} \longmapsto i  x,\hspace{2mm} e_{21}\longmapsto 0, \hspace{2mm} e_{22} \longmapsto 0. 
\end{eqnarray}  
\end{enumerate}
\item  \textbf{The orbit of the   character $ \chi_{(1,0)} $.}  For $\lambda\in \R$, the  representation $\eta_{(1,0)}^{\lambda}$  on $L^2(\R^*,\frac{dx}{|x|})$  given by 
\[\left({\eta}^{\lambda}_{(1,0)}  \left( \left(\begin{matrix}
v_1   \\
v_2    \end{matrix}\right),\left(\begin{matrix}
a& b   \\
0 &  1    \end{matrix}\right) \right)f \right)(x)=e^{iv_1x^{-1}}e^{i  \lambda b x} f( x a ).\]
The associated representation $d\pi_{(1,0)}^{\lambda}$ of $\g_0$ on $ C^{\infty}_c(\mathbb{R^*},\frac{dx}{|x|})$, is given by 
\begin{eqnarray}\nonumber
&& e_{11}\longmapsto x\partial_x,\hspace{2mm} e_{12} \longmapsto i \lambda x,\hspace{2mm} e_{21}\longmapsto \frac{i}{x}, \hspace{2mm} e_{22} \longmapsto 0. 
\end{eqnarray}  

\item  \textbf{The orbit of the   character $ \chi_{(0,\beta)}$, with $\beta\neq 0$.} For $\lambda\in \R$ and $\sigma\in \{0,1\}$, the representation $\eta^{\lambda,\sigma}_{(0,\beta)}$   on 
$L^2\left(\R,dx\right)$  given by 
\[\left(\eta^{\lambda,\sigma}_{(0,\beta)} \left( \left(\begin{matrix}
v_1   \\
v_2   \end{matrix}\right),\left(\begin{matrix}
a& b   \\
0 &  1    \end{matrix}\right) f\right)\right)(x)=e^{i\beta(-xv_1+v_2)}\operatorname{sgn}(a)^{\sigma}|a|^{i\lambda-1/2} f\left( \frac{b+x}{a}\right).\]
The associated representation $d{\eta}_{(0,\beta)}^{\lambda, \sigma}$ of $\g_0$ on $ C^{\infty}_c(\mathbb{R},dx)$,  the inner product  space of smooth compactly supported functions in $L^2(\mathbb{R},dx)$,    is given by 
\begin{eqnarray}\nonumber
&& e_{11}\longmapsto   i\lambda -\frac{1}{2} - x\partial_x,\hspace{2mm} e_{12} \longmapsto \partial_x,\hspace{2mm} e_{21}\longmapsto -i \beta x, \hspace{2mm} e_{22} \longmapsto i\beta. 
\end{eqnarray}  
\end{enumerate}

We summarize the above discussion in a Lemma. 

\begin{lemma}\label{lem2}
The list below contains exactly one representative from  each equivalence class of  unitary irreducible representations of $\mathbb{R}^{2}\rtimes M_2$:
\begin{enumerate}
\item $\{\eta_{(0,0)}^{\lambda,\sigma}| \lambda \in \mathbb{R}, \sigma\in \{0,1\}\}\cup \{ \eta_{(0,0)}\}$,
\item $\{\eta_{(1,0)}^{\lambda}| \lambda \in \mathbb{R}\}$,
\item $\{\eta^{\lambda,\sigma}_{(0,\beta)} | \beta\in \R^*, \lambda \in \mathbb{R}, \sigma\in \{0,1\}\}$.
\end{enumerate}
\end{lemma}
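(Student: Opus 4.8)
The plan is to prove Lemma \ref{lem2} by a careful application of the Mackey machine to the semidirect product $G_0=\mathbb{R}^{2}\rtimes M_2$, whose normal subgroup $N=\mathbb{R}^2$ is abelian. First I would verify that this is a \emph{regular} semidirect product, i.e.\ that the orbit space $\widehat{\mathbb{R}^2}/M_2$ is countably separated (equivalently, that every $M_2$-orbit on $\widehat{\mathbb{R}^2}$ is locally closed). Since the action $A\cdot\chi_u=\chi_{Au}$ is a real-algebraic action of a real algebraic group, its orbits are locally closed and this regularity hypothesis holds; by \cite{MR0031489,MR0044536,MR0396826} Mackey's construction then produces an \emph{exhaustive and irredundant} list of the unitary dual, indexed by pairs consisting of an orbit $\mathcal{O}\subset\widehat{\mathbb{R}^2}$ together with a unitary irreducible representation of the stabilizer of a chosen point of $\mathcal{O}$.

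Next I would determine the orbit structure explicitly. Writing $A=\left(\begin{smallmatrix} a & b \\ 0 & 1\end{smallmatrix}\right)$ one computes $Au=(au_1+bu_2,\,u_2)$, from which three orbit types emerge: the fixed point $\{(0,0)\}$; the punctured axis $\mathbb{R}^*\times\{0\}$, which is the orbit of $(1,0)$; and the horizontal line $\mathbb{R}\times\{\beta\}$ for each $\beta\neq 0$, which is the orbit of $(0,\beta)$. One checks these are pairwise disjoint and exhaust $\widehat{\mathbb{R}^2}$, so that distinct representatives from the list correspond to genuinely distinct orbits and there is no overcounting across the three families.

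I would then compute the stabilizer for each representative and classify its unitary irreducibles. The point $(0,0)$ is fixed by all of $M_2$, so its contribution is the full unitary dual of $M_2$; since $M_2$ is itself the $ax+b$ group, a second (nested) application of Mackey gives exactly the one-dimensional characters $\eta_{(0,0)}^{\lambda,\sigma}$ (from the trivial orbit in $\widehat{\mathbb{R}}$) together with the single infinite-dimensional representation $\eta_{(0,0)}$ (from the unique nontrivial orbit $\mathbb{R}^*$, whose stabilizer is trivial). The stabilizer of $(1,0)$ is the unipotent subgroup $\left\{\left(\begin{smallmatrix}1 & b\\ 0 & 1\end{smallmatrix}\right)\right\}\cong\mathbb{R}$, whose characters $b\mapsto e^{i\lambda b}$ yield the family $\eta_{(1,0)}^{\lambda}$; and the stabilizer of $(0,\beta)$ with $\beta\neq0$ is the diagonal torus $\left\{\left(\begin{smallmatrix}a & 0\\ 0 & 1\end{smallmatrix}\right)\right\}\cong\mathbb{R}^*$, whose characters $a\mapsto\operatorname{sgn}(a)^\sigma|a|^{i\lambda}$ yield the family $\eta_{(0,\beta)}^{\lambda,\sigma}$. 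Irredundancy within each family follows from the injectivity of the parameterization of the corresponding character groups.

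Finally, to match the abstract induced representations with the concrete realizations displayed above, I would choose an explicit Borel cross-section for $M_2/H_\chi$ (namely $a\in\mathbb{R}^*$ for the orbit of $(1,0)$, giving the model on $L^2(\mathbb{R}^*,\tfrac{dx}{|x|})$, and $b\in\mathbb{R}$ for the orbit of $(0,\beta)$, giving the model on $L^2(\mathbb{R},dx)$) and transport the induced action through it, verifying that the cocycle produced by the group law $(v,A)(u,B)=(v+(A^{-1})^Tu,AB)$ reproduces the stated formulas. The main obstacle I anticipate is not any single step but the bookkeeping in this last matching---tracking the half-density factor $|a|^{-1/2}$ in the $\beta\neq0$ case and the correct placement of the $x^{-1}$ in $\eta_{(1,0)}^{\lambda}$---together with the verification of the regularity hypothesis, which is precisely what licenses the claim that the list contains \emph{exactly} one representative per equivalence class rather than merely a supply of inequivalent irreducibles.
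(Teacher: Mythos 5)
Your proposal is correct and follows essentially the same route as the paper, which likewise derives Lemma~\ref{lem2} by applying the Mackey machine to the abelian normal subgroup $\mathbb{R}^2$, computing the three orbit types and their stabilizers (all of $M_2$ for $(0,0)$, the unipotent line for $(1,0)$, the diagonal torus for $(0,\beta)$), and realizing the induced representations in the explicit models on $L^2(\mathbb{R}^*,\tfrac{dx}{|x|})$ and $L^2(\mathbb{R},dx)$. The only difference is one of thoroughness: you make explicit the regularity (locally closed orbits) hypothesis and the cross-section/cocycle bookkeeping that the paper leaves implicit in its citation of Mackey's work, and both of those details check out.
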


\section{The unitary dual of  $\mathbb{R}^{2}\rtimes M_2$   as strong contraction}\label{Chap4}
In this section for any unitary irreducible representation of $\mathbb{R}^{2}\rtimes M_2$, in one of  the explicit realizations that are  given in Section \ref{s4}, 
 we build a corresponding strong contraction and, by doing so, proving Theorem \ref{th2}. We shall organize our calculations according to the orbits of $M_2$ on $\widehat{\R^2}$.  
\subsection{The orbit of the  trivial character $\chi_{0}$ }
For every $\mu \in \R$, we let $\pi_{\mu}$ be the unitary one dimensional representation of $GL_2(\R)$ given by 
\[ A\longmapsto |\det A|^{i\mu}. \]
The corresponding representation $d\pi_{\mu}$ of $\mathfrak{gl}_2(\R)$ is given by
\begin{eqnarray}\nonumber
&& e_{11}\longmapsto i\mu,\hspace{2mm} e_{12} \longmapsto 0,\hspace{2mm} e_{21}\longmapsto 0, \hspace{2mm} e_{22} \longmapsto i\mu. 
\end{eqnarray}  
  
\begin{proposition}\label{p1}
For every $\lambda\in \R$ and $\sigma\in \{0,1\}$,  the representation  $d\eta_{(0,0)}^{\lambda,\sigma}:\mathbb{R}^{2}\rtimes \mathfrak{m}_2 \longrightarrow \operatorname{End}(\C)$ is a strong contraction of the (constant) family $\{ d\pi_{\lambda}:\mathfrak{gl}_2(\R) \longrightarrow \operatorname{End}(\C)\}_{\epsilon \neq 0 }$.
\end{proposition}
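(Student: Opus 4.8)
The plan is to recognize this proposition as a direct instance of Theorem \ref{th1}. First I would identify the contracted target $\pi_0$. Restricting $\pi_\lambda$ to $M_2$, for $A=\left(\begin{matrix} a & b \\ 0 & 1\end{matrix}\right)$ one has $\det A=a$, so $\pi_\lambda(A)=|a|^{i\lambda}$; extending this trivially across the $\R^2$ factor yields precisely $\pi_0=\eta_{(0,0)}^{\lambda,0}$. Theorem \ref{th1} then gives at once that the constant family $\{d\pi_\lambda\}_{\epsilon\neq 0}$ strongly contracts to $d\eta_{(0,0)}^{\lambda,0}$, which settles the case $\sigma=0$.

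To absorb the case $\sigma=1$ I would observe that the differential does not see the sign. The characters $\eta_{(0,0)}^{\lambda,1}$ and $\eta_{(0,0)}^{\lambda,0}$ differ only by the locally constant factor $\operatorname{sgn}(a)^{\sigma}$, whose derivative vanishes, so $d\eta_{(0,0)}^{\lambda,1}=d\eta_{(0,0)}^{\lambda,0}$ as representations of $\mathbb{R}^{2}\rtimes\mathfrak{m}_2$ on $\C$. This is already visible in Lemma \ref{lem2}, where the listed infinitesimal action of $d\eta_{(0,0)}^{\lambda,\sigma}$ is independent of $\sigma$. Hence the $\sigma=1$ case follows from the $\sigma=0$ case with no extra work.

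Alternatively, since $W=\C$ is one-dimensional, I could verify the two strong-contraction conditions by hand; here they coincide, as evaluation at a point and the norm both reduce to the modulus on $\C$. It suffices to check the four basis elements under $t_\epsilon$: on $e_{11},e_{12}\in\mathfrak{m}_2$ the map $t_\epsilon$ is the identity and $d\pi_\lambda$ already agrees with $d\eta_{(0,0)}^{\lambda,\sigma}$, whereas on $e_{21},e_{22}\in\mathfrak{s}_2$ it scales by $\epsilon$, producing $d\pi_\lambda(\epsilon e_{21})=0$ and $d\pi_\lambda(\epsilon e_{22})=i\lambda\epsilon$. The only genuine limit is the last one, $i\lambda\epsilon\to 0$ as $\epsilon\to 0$, matching $d\eta_{(0,0)}^{\lambda,\sigma}(e_{22})=0$. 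I expect no real obstacle in this argument: the substance is entirely contained in Theorem \ref{th1}, and the single point requiring attention is the bookkeeping of the parameter $\sigma$, which evaporates upon differentiation because the sign character is locally constant.
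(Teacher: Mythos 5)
Your proposal is correct, and your ``alternative'' verification is in fact exactly the paper's own proof: the authors prove Proposition \ref{p1} by evaluating $d\pi_{\lambda}(t_{\epsilon}e_{ij})$ on the four basis elements ($i\lambda$, $0$, $0$, and $\epsilon i\lambda \to 0$) and noting, as you do, that on a one-dimensional space pointwise convergence and norm convergence coincide. Your primary route, however, is genuinely different: you deduce the proposition as the one-dimensional instance of Theorem \ref{th1}, identifying $\pi_0$ (restriction of $\pi_{\lambda}$ to $M_2$, extended trivially) with $\eta_{(0,0)}^{\lambda,0}$ and then absorbing $\sigma=1$ by the observation that $\operatorname{sgn}(a)^{\sigma}$ is locally constant, so $d\eta_{(0,0)}^{\lambda,1}=d\eta_{(0,0)}^{\lambda,0}$ on the nose. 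That deduction is sound, but to invoke Theorem \ref{th1} literally you need $\pi_{\lambda}$ to be ``realized on a Hilbert space of functions''; the paper supplies precisely the needed remark just before its proof, namely that $\C$ may be regarded as $L^2$ of the one-point measure space (and then $\mathcal{H}^{\infty}=\C$), so you should state that explicitly rather than leave it implicit. What each approach buys: the paper's computation is self-contained and four lines long, while your Theorem \ref{th1} route is more conceptual---it exhibits the proposition as a special case of the general Kirillov-conjecture contraction and disposes of the parameter $\sigma$ without computing any limit. One bookkeeping nit: the $\sigma$-independence of the infinitesimal action is displayed in the enumerated list of Section \ref{s4} preceding Lemma \ref{lem2}, not in Lemma \ref{lem2} itself, which lists only the group representations.
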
  
  We first remark that in this case we can  think of the one dimensional vector space $\C$ as the space of square integrable functions on the trivial measurable space consisting of one point only.  
\begin{proof}
In this case, since the underlying vector space is one dimensional  pointwise convergence implies norm convergence. Indeed,
\begin{eqnarray}\nonumber
&& \lim_{\epsilon \longrightarrow 0 } d\pi_{\lambda}:(t_{\epsilon}(e_{11})) =  i\lambda= d\eta_{(0,0)}^{\lambda,\sigma}(e_{11}),  \\ \nonumber
&& \lim_{\epsilon \longrightarrow 0 } d\pi_{\lambda}:(t_{\epsilon}(e_{12})) =0= d\eta_{(0,0)}^{\lambda,\sigma}(e_{12}),  \\ \nonumber
&& \lim_{\epsilon \longrightarrow 0 } d\pi_{\lambda}:(t_{\epsilon}(e_{21})) =0= d\eta_{(0,0)}^{\lambda,\sigma}(e_{21}),  \\ \nonumber
&& \lim_{\epsilon \longrightarrow 0 } d\pi_{\lambda}:(t_{\epsilon}(e_{22})) = \lim_{\epsilon \longrightarrow 0 } \epsilon  i\lambda=0= d\eta_{(0,0)}^{\lambda,\sigma}(e_{22}).  
\end{eqnarray} 
\end{proof}

\begin{proposition}
There is a unitary irreducible representation $\pi$  of $GL_2(\R)$ on   a Hilbert space of functions  $\mathcal{H}$ and a dense $\mathfrak{gl}_2(\R)$-invariant subspace  $\mathcal{H}^{\infty}$ of  $\mathcal{H}$,  such that the (constant) family of representations  $\{ d\pi:\mathfrak{gl}_2(\R) \longrightarrow \operatorname{End}(\mathcal{H}^{\infty})\}_{\epsilon \neq 0 }$ strongly contract to a representation isomorphic to $d\eta_{(0,0)}:\mathbb{R}^{2}\rtimes \mathfrak{m}_2 \longrightarrow \operatorname{End}(C^{\infty}(\mathbb{R^*},\frac{dx}{|x|}))$.
\end{proposition}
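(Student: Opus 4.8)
The plan is to produce a $\pi$ whose restriction to $M_2$ is, on the nose, the representation $\eta_{(0,0)}$, and then to read off the conclusion from Theorem~\ref{th1}. The natural source of such a $\pi$ is the Kirillov model: by \cite{MR0139691,MR3055523} any infinite-dimensional unitary irreducible representation of $GL_2(\R)$ (for concreteness, a unitary principal series representation) can be realized on $\mathcal{H}=L^2(\R^*,\tfrac{dx}{|x|})$ in such a way that the mirabolic subgroup acts by
\[
\left(\pi\left(\begin{matrix} a & b \\ 0 & 1\end{matrix}\right)f\right)(x)=e^{ibx}f(ax),
\]
which is exactly the defining formula for $\eta_{(0,0)}$ (the additive character being $\psi(t)=e^{it}$, and the $L^2(\R^*,\tfrac{dx}{|x|})$ inner product being the one the mirabolic action preserves). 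First I would fix such a $\pi$ together with this realization, and take $\mathcal{H}^{\infty}$ to be the space of $GL_2(\R)$-smooth vectors, which is dense and $\mathfrak{gl}_2(\R)$-invariant by \cite[Thm.~3.15]{Knapp}.

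With $K=M_2$ and $S=\R^2$ as in Section~3, the associated representation $\pi_0$ of $\R^2\rtimes M_2$ is $\pi|_{M_2}$ extended trivially across $\R^2$; by the previous step this is $\eta_{(0,0)}$ extended trivially. At the infinitesimal level the trivial extension forces $d\pi_0(e_{21})=d\pi_0(e_{22})=0$, matching $d\eta_{(0,0)}(e_{21})=d\eta_{(0,0)}(e_{22})=0$, while on $\mathfrak{m}_2$ the operators $e_{11}\mapsto x\partial_x$ and $e_{12}\mapsto ix$ agree by construction. Thus $d\pi_0$ is carried to $d\eta_{(0,0)}$ by the unitary identifying the Kirillov model with the model of $\eta_{(0,0)}$ on $L^2(\R^*,\tfrac{dx}{|x|})$; in particular the two are isomorphic, and $\mathcal{H}^{\infty}$ corresponds to a dense subspace of smooth functions on which $d\eta_{(0,0)}$ acts by the stated formulas.

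The conclusion is then immediate: since $\pi$ is a unitary irreducible representation realized on the Hilbert space of functions $\mathcal{H}$, Theorem~\ref{th1} applies verbatim and shows that the constant family $\{d\pi_\epsilon=d\pi\}_{\epsilon\neq 0}$ strongly contracts to $d\pi_0$ on $\mathcal{H}^{\infty}$. Since $d\pi_0\cong d\eta_{(0,0)}$, this is a strong contraction to a representation isomorphic to $d\eta_{(0,0)}$, as required.

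The contraction itself is handed to us by Theorem~\ref{th1}, so the only genuine work is in the first two steps: checking that the standard Kirillov action with character $\psi(t)=e^{it}$ reproduces $\eta_{(0,0)}$ exactly—including the normalization of the $M_2$-action and the choice of invariant inner product—and then identifying the smooth vectors $\mathcal{H}^{\infty}$ with a function space on which the formulas for $d\eta_{(0,0)}$ make literal sense (matching the $C^{\infty}(\R^*,\tfrac{dx}{|x|})$ of the statement). I expect this last identification of smooth vectors, rather than any contraction estimate, to be the main—though essentially routine—obstacle.
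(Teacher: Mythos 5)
Your proposal is correct in outline, but it takes a genuinely different route from the paper at the key identification step. The paper does \emph{not} use the Kirillov model here: it takes \emph{any} infinite-dimensional unitary irreducible $\pi$ realized on \emph{any} Hilbert space of functions (e.g.\ a principal series realized on functions on $GL_2(\R)$), applies Theorem~\ref{th1} to get the strong contraction to $d\pi_0$, and then identifies $\pi_0\cong\eta_{(0,0)}$ abstractly: by the Kirillov conjecture (built into Theorem~\ref{th1}) the restriction $\pi_0|_{M_2}$ is irreducible and infinite-dimensional, Mackey theory (as in Section~\ref{s4}) says $M_2$ has exactly one such representation up to equivalence, and since $\R^2$ acts trivially this forces $\pi_0\cong\eta_{(0,0)}$. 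You instead rig the realization so the identification is tautological: you put $\pi$ in its unitary Kirillov model on $L^2(\R^*,\frac{dx}{|x|})$, where the mirabolic acts by the defining formula of $\eta_{(0,0)}$ on the nose. Both arguments lean on Theorem~\ref{th1} for the contraction itself, so the difference is what each imports: the paper's route needs only the Mackey classification plus the Kirillov conjecture, and is insensitive to the choice of functional realization; your route is more concrete and arguably better showcases the Kirillov model (one of the paper's stated themes), but it imports the nontrivial theorem that every infinite-dimensional unitary irreducible representation of $GL_2(\R)$ admits a Kirillov model whose invariant inner product is exactly the $L^2(\R^*,\frac{dx}{|x|})$ one. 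Note also that the final step you flag as the ``main obstacle''---matching the space $\mathcal{H}^{\infty}$ of $GL_2(\R)$-smooth vectors with the concrete domain $C^{\infty}_c(\R^*,\frac{dx}{|x|})$ on which $d\eta_{(0,0)}$ was defined---is left equally loose in the paper, which passes from equivalence of the group representations $\pi_0\cong\eta_{(0,0)}$ to equivalence of the derived representations without reconciling the domains; so on this point you are no worse off than the original.
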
  
\begin{proof}
We can take for $\pi$ any unitary irreducible infinite-dimensional representation of $GL_2(\R)$ realized on a Hilbert space of functions. For example, we can take a unitary irreducible principal series with $\mathcal{H}$  a Hilbert space of functions on $GL_2(\R)$. Theorem \ref{th1} guarantees that $\pi_0$ is unitary irreducible and  the constant family of representations  $\{d\pi_{\epsilon}:\g_2\longrightarrow \operatorname{End}(\mathcal{H}^{\infty}) \}_{\epsilon\in \R\neq 0}$ 
with $d\pi_{\epsilon}=d\pi$,  strongly contracts to  $d\pi_{0}:\mathbb{R}^{2}\rtimes \mathfrak{m}_2\longrightarrow \operatorname{End}(\mathcal{H}^{\infty})$. The representation $\pi_0|_{M_2}$ of $M_2$ is unitary irreducible and infinite-dimensional. Up to equivalence there is exactly one such representation. Since $\R^2\rtimes 1$ acts trivially via   $\pi_0$ then   $\pi_0$ must be equivalent to $\eta_{(0,0)}$.  Hence $d\pi_0$ is equivalent to $d\eta_{(0,0)}$.
\end{proof}

\subsection{The orbit of the   character $ \chi_{(1,0)} $.}
For every integer $n>1$  there is a discrete series representation  $D^{n}$ of $GL_2(\C)$ in which the scalar matrices act trivially.  In the  Kirillov model on $L^2(\R^*,\frac{dx}{|x|})$ the corresponding representation of $\mathfrak{gl}_2(\R)$  on   $C^{\infty}_c(\mathbb{R^*},\frac{dx}{|x|})$ is given by 
\begin{eqnarray}\nonumber
&& e_{11}\longmapsto x\partial_x, \hspace{1mm}e_{12}\longmapsto i x, \hspace{1mm} e_{21} \longmapsto -i\frac{n^2-1}{4x} +ix\partial_{xx},\hspace{1mm} e_{22}\longmapsto -x\partial_x.
\end{eqnarray}
For more details, see \cite{MR3055523}.
For $ 0 \neq q\in \R$ we  twist the above mentioned representation via conjugation by the diagonal matrix $\operatorname{diag}(q,1)$  to obtain the isomorphic representation $d{D^{n,q}}$ given by 
\begin{eqnarray}\nonumber
&& e_{11}\longmapsto x\partial_x, \hspace{1mm}e_{12}\longmapsto iq x, \hspace{1mm} e_{21} \longmapsto -i\frac{n^2-1}{4q x} +i\frac{x}{q}\partial_{xx},\hspace{1mm} e_{22}\longmapsto  -x\partial_x.
\end{eqnarray}

\begin{proposition}\label{pro3}
For any  $0\neq \lambda\in \R$   the representation $d\eta_{(1,0)}^{\lambda}:\mathbb{R}^{2}\rtimes \mathfrak{m}_2\longrightarrow \operatorname{End}( C^{\infty}_c(\mathbb{R^*},\frac{dx}{|x|}))$ is a strong contraction of the sequence  $\{ d{D^{n,\lambda}}:\mathfrak{gl}_2(\R) \longrightarrow \operatorname{End}( C^{\infty}_c(\mathbb{R^*},\frac{dx}{|x|}))\}_{n\in \mathbb{N}  }$.
\end{proposition}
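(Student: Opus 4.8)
The plan is to exhibit an explicit sequence $\epsilon_n \to 0$ and to verify the two defining conditions of a strong contraction directly on each basis element $e_{11}, e_{12}, e_{21}, e_{22}$, extending afterwards by linearity (both sides depend linearly on the Lie algebra element, so a check on a basis suffices). The guiding observation is that under $t_{\epsilon_n}$ the operators attached to $e_{11}$ and $e_{12}$ are left unchanged, since $t_{\epsilon_n}$ fixes the first row, while those attached to $e_{21}$ and $e_{22}$ get multiplied by $\epsilon_n$. Comparing $dD^{n,\lambda}$ with the target, one sees at once that $dD^{n,\lambda}(e_{11}) = x\partial_x = d\eta_{(1,0)}^{\lambda}(e_{11})$ and $dD^{n,\lambda}(e_{12}) = i\lambda x = d\eta_{(1,0)}^{\lambda}(e_{12})$, so for these two generators the difference vanishes identically for every $n$ and both conditions are immediate.

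The crux is the generator $e_{21}$. Here $dD^{n,\lambda}(e_{21}) = -i\tfrac{n^2-1}{4\lambda x} + i\tfrac{x}{\lambda}\partial_{xx}$ carries a zeroth-order term that blows up like $n^2$, whereas the target value $d\eta_{(1,0)}^{\lambda}(e_{21}) = i/x$ is finite. The idea is to choose $\epsilon_n$ so as to cancel this quadratic growth precisely. Taking
\[\epsilon_n := -\frac{4\lambda}{n^2-1},\]
which is real, nonzero for integers $n>1$ (as $\lambda \neq 0$), and tends to $0$, we obtain
\[dD^{n,\lambda}(t_{\epsilon_n} e_{21}) = \epsilon_n\, dD^{n,\lambda}(e_{21}) = \frac{i}{x} - \frac{4i x}{n^2-1}\,\partial_{xx},\]
in which the leading term is exactly $d\eta_{(1,0)}^{\lambda}(e_{21})$ and the remainder is a fixed second-order operator scaled by $1/(n^2-1)$.

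It then remains to show the remainder dies in both the pointwise and the $L^2$ sense, and to dispose of $e_{22}$. For a fixed $f \in C^{\infty}_c(\mathbb{R^*},\tfrac{dx}{|x|})$ supported on a compact $K \subset \R^*$, the function $x\partial_{xx} f$ is bounded and supported on $K$, whence
\[\left\| \frac{4i x}{n^2-1}\,\partial_{xx} f \right\|^2 = \frac{16}{(n^2-1)^2}\int_K \left| x\partial_{xx} f \right|^2 \frac{dx}{|x|} \longrightarrow 0,\]
and pointwise convergence is equally clear; this settles both conditions for $e_{21}$. For $e_{22}$ we have $t_{\epsilon_n}(e_{22}) = \epsilon_n e_{22}$ and $dD^{n,\lambda}(e_{22}) = -x\partial_x$, so $dD^{n,\lambda}(t_{\epsilon_n} e_{22})f = -\epsilon_n x\partial_x f \to 0 = d\eta_{(1,0)}^{\lambda}(e_{22})f$, again pointwise and in norm by the same compact-support estimate with factor $|\epsilon_n| = 4|\lambda|/(n^2-1)$. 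The only genuine content is the tuning of $\epsilon_n$ against the $n^2$-divergence in $dD^{n,\lambda}(e_{21})$; once that is fixed, every estimate reduces to the boundedness of $f$ and its derivatives on a fixed compact subset of $\R^*$, and I expect no serious obstacle beyond bookkeeping.
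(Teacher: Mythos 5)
Your proof is correct and takes essentially the same approach as the paper: the paper likewise verifies the two strong-contraction conditions on the basis $e_{11},e_{12},e_{21},e_{22}$ with a sequence tuned against the $n^2$-growth in $dD^{n,\lambda}(e_{21})$, choosing $\epsilon_n=-\tfrac{4\lambda}{n^2}$ and disposing of norm convergence by Lebesgue dominated convergence. Your variant $\epsilon_n=-\tfrac{4\lambda}{n^2-1}$ (equally valid, since the sequence is indexed by $n>1$) makes the cancellation of the singular term exact, so the discrepancy is a fixed compactly supported function scaled by $\tfrac{1}{n^2-1}$ and norm convergence follows from a direct estimate instead.
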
    
\begin{proof} 
We shall use the sequence  $\epsilon_n=-\frac{4\lambda}{n^2 }$. For pointwise convergence observe that for every $f\in C^{\infty}_c(\mathbb{R^*})$ and $x\in \R$, 
\begin{eqnarray}\nonumber
&& \lim_{n\longrightarrow \infty } \left(d{D^{n,\lambda}}(t_{\epsilon_n}(e_{11}))f\right)(x) =  \lim_{n\longrightarrow \infty } \left(x\partial_x\right)f(x)= x\partial_xf(x)=\left(d\eta_{(1,0)}^{\lambda}(e_{11})f\right)(x),\\ \nonumber
&& \lim_{n\longrightarrow \infty } \left(d{D^{n,\lambda}}(t_{\epsilon_n}(e_{12}))f\right)(x) =  i \lambda xf(x)=\left(d\eta_{(1,0)}^{\lambda}(e_{12})f\right)(x),\\ \nonumber
&& \lim_{n\longrightarrow \infty } \left(d{D^{n,\lambda}}(t_{\epsilon_n}(e_{21}))f\right)(x) =\lim_{n\longrightarrow \infty }  -\frac{4\lambda}{n^2 \lambda} \left( -i\frac{n^2-1}{4x} +ix\partial_{xx}\right)f(x)= \\ \nonumber
&& ix^{-1}f(x)=\left(d\eta_{(1,0)}^{\lambda}(e_{21})f\right)(x),\\ \nonumber
&& \lim_{n\longrightarrow \infty } \left(d{D^{n,\lambda}}(t_{\epsilon_n}(e_{22}))f\right)(x) =   \lim_{n\longrightarrow \infty } -\frac{4\lambda}{n^2 }   \left(-x\partial_x\right)f(x)=0= \left(d\eta_{(1,0)}^{\lambda}(e_{22})f\right)(x).
\end{eqnarray}
Norm convergence  follows from Lebesgue dominated convergence theorem. 
\end{proof}

\begin{proposition}
The representation $d\eta_{(1,0)}^{0}:\mathbb{R}^{2}\rtimes \mathfrak{m}_2\longrightarrow \operatorname{End}( C^{\infty}_c(\mathbb{R^*},\frac{dx}{|x|}))$ is a strong contraction of the sequence  $\{ d{D^{n,1/n}}:\mathfrak{gl}_2(\R) \longrightarrow \operatorname{End}( C^{\infty}_c(\mathbb{R^*},\frac{dx}{|x|}))\}_{\epsilon \neq 0 }$.
\end{proposition}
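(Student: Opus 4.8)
The plan is to mimic the proof of Proposition \ref{pro3}, the only new ingredient being the correct choice of the vanishing sequence $\epsilon_n$. First I would record the target and source operators explicitly. Setting $\lambda = 0$ in the formula for $d\eta_{(1,0)}^{\lambda}$ gives $e_{11}\mapsto x\partial_x$, $e_{12}\mapsto 0$, $e_{21}\mapsto i/x$, $e_{22}\mapsto 0$. Setting $q=1/n$ in the formula for $d D^{n,q}$ gives $e_{11}\mapsto x\partial_x$, $e_{12}\mapsto i x/n$, $e_{21}\mapsto -i\,n(n^2-1)/(4x)+i\,n\,x\partial_{xx}$, and $e_{22}\mapsto -x\partial_x$. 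Recalling that for $\mathfrak{gl}_2(\R)$ the contraction map fixes $e_{11},e_{12}$ and multiplies $e_{21},e_{22}$ by $\epsilon_n$, the two conditions defining a strong contraction then reduce to four pointwise limits (plus the corresponding norm limits).

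The heart of the argument is selecting $\epsilon_n$. Since $\lambda=0$ we cannot take $q=\lambda$ as in Proposition \ref{pro3}; instead $q=1/n$ itself tends to zero, and this changes the scaling. The dominant term of $d D^{n,1/n}(e_{21})$ carries a factor $1/q=n$, so it behaves like $-i\,n^3/(4x)$, an order of magnitude larger than the $n^2$ growth in the previous proposition. This forces the choice $\epsilon_n=-4/n^3$ (rather than $-4\lambda/n^2$). With this choice one computes $d D^{n,1/n}(t_{\epsilon_n}e_{21}) = \tfrac{i}{x}(1-1/n^2)-\tfrac{4i}{n^2}x\partial_{xx}$, whose pointwise limit on any $f\in C^{\infty}_c(\R^*)$ is exactly $i/x$, matching $d\eta_{(1,0)}^{0}(e_{21})$.

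The remaining generators are then routine. Since $t_{\epsilon_n}$ fixes $e_{11}$, the operator $x\partial_x$ matches the target on the nose; since $t_{\epsilon_n}$ also fixes $e_{12}$, the source action $i x/n$ tends to $0$, which is precisely the $\lambda=0$ target for $e_{12}$ (this is exactly where the choice $q=1/n$ pays off). For $e_{22}$ the contraction multiplies $-x\partial_x$ by $\epsilon_n=-4/n^3$, giving $4x\partial_x/n^3\to 0$. Norm convergence in $L^2(\R^*,dx/|x|)$ follows from the dominated convergence theorem applied to the compactly supported $f$ and its derivatives, just as in Proposition \ref{pro3}.

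I expect the only genuine obstacle to be the determination of the scaling $\epsilon_n=-4/n^3$: two distinct $n$-dependencies enter $e_{21}$, one from the twist $q=1/n$ and one from the contraction map $t_{\epsilon_n}$, and they must be balanced so that the leading term converges to $i/x$ while the subleading $x\partial_{xx}$ term is simultaneously annihilated and the $e_{12}$ action is sent to $0$. Once $\epsilon_n$ is fixed, every limit is elementary.
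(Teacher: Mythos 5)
Your proposal is correct and follows exactly the route the paper intends: the paper's own proof is the single remark that with $\epsilon_n=-\tfrac{4}{n^3}$ the argument of Proposition \ref{pro3} carries over, and you independently arrived at that same scaling, correctly balancing the $1/q=n$ factor from the twist against the $n^3$ growth of the leading term of $e_{21}$ so that $dD^{n,1/n}(t_{\epsilon_n}e_{21})=\tfrac{i}{x}\bigl(1-\tfrac{1}{n^2}\bigr)-\tfrac{4i}{n^2}x\partial_{xx}\to \tfrac{i}{x}$, with the remaining generators and the dominated-convergence step for norm convergence handled as in Proposition \ref{pro3}. Your write-up in fact supplies the details the paper leaves implicit.
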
   
Using $\epsilon_n=-\frac{4}{n^3}$ the proof is similar to that of Proposition \ref{pro3}.

\subsection{The orbit of the   character $\chi_{(0,\beta)}$ with  $0\neq \beta \in \R $.} 

In the non-compact picture, the unitary  principal series of $SL_2(\R)$  are realized on $L^2(\R,dx)$  via 
\[\left(\mathcal{P}^{\sigma,\nu} \left(\begin{matrix}
a& b   \\
c &  d    \end{matrix}\right) f\right)(x)=\operatorname{sgn}(-b x+d)^{\sigma}|-b x+d|^{-1-i\nu}f\left( \frac{ax-c}{-bx+d}\right),\]
here $\sigma\in \{0,1\}$ and $\nu\in \R$. See e.g., \cite[p. 36]{Knapp}.  We twist this representation by precompose $\mathcal{P}^{\sigma,\nu}$ with inverse transpose. Explicitly,  the new action is  $\widetilde{\mathcal{P}}^{\sigma,\nu}$  given by 
\[\left(\widetilde{\mathcal{P}}^{\sigma,\nu} \left(\begin{matrix}
a& b   \\
c &  d    \end{matrix}\right) f\right)(x)=\operatorname{sgn}(c x+a)^{\sigma}|c x+a|^{-1-i\nu}f\left( \frac{dx+b}{cx+a}\right).\]
 For every $\mu\in \R$, we  extend $\widetilde{\mathcal{P}}^{\sigma,\nu}$ to a unitary irreducible representation $\widetilde{\mathcal{P}}^{\sigma,\nu,\mu}$ of  $GL_2(\R)$,  by
 \[\left(\widetilde{\mathcal{P}}^{\sigma,\nu,\mu} \left(\begin{matrix}
a& b   \\
c &  d    \end{matrix}\right) f\right)(x)=|ad-bc|^{1/2+i\mu}\operatorname{sgn}(c x+a)^{\sigma}|c x+a|^{-1-i\nu}f\left( \frac{dx+b}{cx+a}\right).\]

 The corresponding  representation  $d\widetilde{\mathcal{P}}^{\sigma,\nu,\mu} $ of $\mathfrak{gl}_2(\R)$ on $C^{\infty}_c(\R)$ is given by  
\begin{eqnarray}\nonumber
&& e_{11}\longmapsto  i\mu-\frac{1}{2}-i\nu -x\partial_x,\hspace{1mm} e_{12}\longmapsto \partial_x, \\ \nonumber
&& e_{21} \longmapsto  -(1+i\nu)x -x^2\partial_{x}, \hspace{1mm}
 e_{22}\longmapsto    \frac{1}{2}+i\mu +x\partial_x. 
\end{eqnarray}

 \begin{proposition}\label{p5}
For $\lambda\in \R$ and $\sigma\in \{0,1\}$, the representation $d\pi^{\lambda,\sigma}_u:\mathbb{R}^{2}\rtimes \mathfrak{m}_2\longrightarrow \operatorname{End}( C^{\infty}_c(\R))$
 is a strong contraction of the family $\{ d\widetilde{\mathcal{P}}^{\sigma,\beta/\epsilon,\lambda+\beta/\epsilon}:\mathfrak{gl}_2(\R) \longrightarrow \operatorname{End}( C^{\infty}_c(\R))\}_{\epsilon \neq 0 }$.
\end{proposition}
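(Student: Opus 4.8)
The plan is to verify the two defining conditions of strong contraction directly on the standard basis $\{e_{11},e_{12},e_{21},e_{22}\}$ of $V_2=\mathfrak{gl}_2(\R)$; since $t_\epsilon$ and each $d\widetilde{\mathcal{P}}^{\sigma,\nu,\mu}$ are linear, checking the basis suffices for an arbitrary $Y\in V_2$. First I would record the contraction maps for $n=2$: because $\mathfrak{m}_2=\operatorname{span}\{e_{11},e_{12}\}$ and $\mathfrak{s}_2=\operatorname{span}\{e_{21},e_{22}\}$, we have $t_\epsilon(e_{11})=e_{11}$, $t_\epsilon(e_{12})=e_{12}$, $t_\epsilon(e_{21})=\epsilon e_{21}$, $t_\epsilon(e_{22})=\epsilon e_{22}$. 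I would also identify the target: $d\pi^{\lambda,\sigma}_u$ is the operator $d\eta^{\lambda,\sigma}_{(0,\beta)}$ of Section \ref{s4} and Lemma \ref{lem2}, with $u=(0,\beta)$ for the same $\beta$ appearing in the contracting family. A preliminary remark is that $W=C^\infty_c(\R)$ is a common invariant domain for every representation in play, since each generator acts as a polynomial-coefficient differential operator and hence preserves both smoothness and compact support.

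Next I would substitute $\nu=\beta/\epsilon$ and $\mu=\lambda+\beta/\epsilon$ into the explicit formulas for $d\widetilde{\mathcal{P}}^{\sigma,\nu,\mu}$ and apply $t_\epsilon$ to each generator. The arithmetic is arranged so that the divergent $\beta/\epsilon$ contributions either cancel or stabilize. For $e_{11}$ the combination $i\mu-i\nu$ collapses to $i\lambda$, giving exactly $i\lambda-\tfrac12-x\partial_x$ with no residual $\epsilon$-dependence, and $e_{12}$ maps identically to $\partial_x$. In the contracted directions the prefactor $\epsilon$ from $t_\epsilon$ meets the $\beta/\epsilon$ inside the coefficients: one computes $d\widetilde{\mathcal{P}}^{\sigma,\nu,\mu}(t_\epsilon e_{21})=-i\beta x-\epsilon(x+x^2\partial_x)$ and $d\widetilde{\mathcal{P}}^{\sigma,\nu,\mu}(t_\epsilon e_{22})=i\beta+\epsilon(\tfrac12+i\lambda+x\partial_x)$. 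Letting $\epsilon\to 0$ pointwise yields $-i\beta x$ and $i\beta$, which are precisely $d\eta^{\lambda,\sigma}_{(0,\beta)}(e_{21})$ and $d\eta^{\lambda,\sigma}_{(0,\beta)}(e_{22})$, establishing condition (1).

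For condition (2) the same expressions make the norm estimate immediate rather than requiring a limiting argument. For $e_{11}$ and $e_{12}$ the difference of operators is identically zero. For $e_{21}$ and $e_{22}$ the difference applied to a fixed $f\in C^\infty_c(\R)$ is exactly $\epsilon$ times a single function in $C^\infty_c(\R)$ that is independent of $\epsilon$, namely $-\epsilon(xf+x^2f')$ and $\epsilon(\tfrac12 f+i\lambda f+xf')$; hence the $L^2(\R,dx)$ norm of each difference equals $|\epsilon|$ times a finite constant and tends to $0$. Both conditions therefore hold on each generator, and linearity completes the argument.

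There is essentially no hard analytic step here. Unlike the discrete series case of Proposition \ref{pro3}, no appeal to dominated convergence is needed, because the error is a fixed compactly supported function scaled by $\epsilon$. The only point that demands care is the bookkeeping showing that the choices $\nu=\beta/\epsilon$ and $\mu=\lambda+\beta/\epsilon$ are exactly what cancels the $1/\epsilon$ singularities against the $\epsilon$ coming from $t_\epsilon$, leaving the correct finite limits $-i\beta x$ and $i\beta$; verifying this alignment across all four generators is the substance of the proof.
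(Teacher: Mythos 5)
Your proposal is correct, and for the pointwise part it coincides with the paper's proof: the same substitutions $\nu=\beta/\epsilon$, $\mu=\lambda+\beta/\epsilon$, the same generator-by-generator cancellation of the $i\beta/\epsilon$ terms in $e_{11}$, and the same limits $-i\beta x$ and $i\beta$ for the contracted directions. Where you genuinely diverge is the norm-convergence step. The paper handles it by a separate lemma: for a family of differential operators $D_{\epsilon}=\sum_i d_{i,\epsilon}(x)\partial_x^i$ whose coefficients depend continuously on $\epsilon$, one has $\lim_{\epsilon\to 0}\int_{\R}|D_{\epsilon}(f)(x)|^2\,dx=\int_{\R}|D_{0}(f)(x)|^2\,dx$, proved via Lebesgue dominated convergence. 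You instead observe that after the cancellations the error operator is \emph{exactly} linear in $\epsilon$ --- the differences are $-\epsilon(xf+x^2f')$ and $\epsilon(\tfrac12 f+i\lambda f+xf')$, fixed elements of $C^{\infty}_c(\R)$ scaled by $\epsilon$ --- so each $L^2$ difference is $|\epsilon|$ times a finite constant, and no measure-theoretic limit theorem is needed. Your argument is more elementary and even sharper (it yields an explicit $O(|\epsilon|)$ convergence rate), while the paper's lemma buys generality: it applies verbatim to situations where the $\epsilon$-dependence of the coefficients is merely continuous rather than affine, as in the discrete-series contraction of Proposition \ref{pro3}, and so serves the paper as a reusable tool rather than a one-off estimate. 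Your identification of the target $d\pi^{\lambda,\sigma}_u$ with $d\eta^{\lambda,\sigma}_{(0,\beta)}$ and your reduction to the basis by linearity are both correct and match the paper's implicit conventions.
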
 

\begin{proof} For pointwise convergence observe that for every $f\in C^{\infty}_c(\mathbb{R^*})$ and $x\in \R$, 
\begin{eqnarray}\nonumber
&& \lim_{\epsilon \longrightarrow 0 } \left(d\widetilde{\mathcal{P}}^{\sigma,\beta/\epsilon,\lambda+\beta/\epsilon}(t_{\epsilon}(e_{11}))f\right)(x) =   \lim_{\epsilon \longrightarrow 0 }\left(i(\lambda+\frac{\beta}{\epsilon})-\frac{1}{2}-i\frac{\beta}{\epsilon} -x\partial_x \right)f(x)=\\ \nonumber
&&  \hspace{1cm} \left(i\lambda -\frac{1}{2} - x\partial_x\right)f(x)= \left(d\pi^{\lambda,\sigma}_u(e_{11})f\right)(x),  \\ \nonumber
&& \lim_{\epsilon \longrightarrow 0 } \left(d\widetilde{\mathcal{P}}^{\sigma,\beta/\epsilon,\lambda+\beta/\epsilon}(t_{\epsilon}(e_{12})) f\right)(x)=\partial_xf(x)= \left(d\pi^{\lambda,\sigma}_u(e_{12})f\right)(x),  \\ \nonumber
&&  \lim_{\epsilon \longrightarrow 0 } \left(d\widetilde{\mathcal{P}}^{\sigma,\beta/\epsilon,\lambda+\beta/\epsilon}(t_{\epsilon}(e_{21}))f\right)(x) = \lim_{\epsilon \longrightarrow 0 }\epsilon\left( -(1+i\frac{\beta}{\epsilon})x -x^2\partial_{x}\right)f(x) =\\ \nonumber
&& \hspace{1cm}-i\beta xf(x)=   \left(d\pi^{\lambda,\sigma}_u(e_{21})f\right)(x),  \\ \nonumber
&& \lim_{\epsilon \longrightarrow 0 }  \left(d\widetilde{\mathcal{P}}^{\sigma,\beta/\epsilon,\lambda+\beta/\epsilon}(t_{\epsilon}(e_{22}))f\right)(x) = \lim_{\epsilon \longrightarrow 0 } \epsilon \left(  \frac{1}{2}+ i(\lambda+\frac{\beta}{\epsilon})+x\partial_x\right)f(x)
= \\ \nonumber
&&\hspace{1cm}i\beta f(x)= \left(d\pi^{\lambda,\sigma}_u(e_{22})f\right)(x).  
\end{eqnarray}
Norm convergence  follows from the following lemma which easily  proved using  Lebesgue dominated convergence theorem. 
\begin{lemma}
For every  $\epsilon \in \R$, let $D_{\epsilon}$ be a smooth  differential operator  on $C^{\infty}_c(\R)$. Explicitly, $D_{\epsilon}=\sum_{i=0}^nd_{i,\epsilon}(x)\partial_x^i$
for some smooth functions $d_{i,\epsilon}(x)\in C^{\infty}(\R)$.  If for every $i\in \{1,2,,,n\}$,  and every $x\in \R$,   $\epsilon \longmapsto d_{i,\epsilon}(x)$  is a continuous  function (of $\epsilon$) then  for every $f\in   C^{\infty}_c(\R)$,
\[ \lim_{\epsilon \longrightarrow 0}\int_{\R}|D_{\epsilon}(f)(x)|^2dx=\int_{\R}|D_{0}(f)(x)|^2dx.\]
\end{lemma}
\end{proof}

Combining Lemma \ref{lem2} and Propositions \ref{p1}-\ref{p5} we obtaine the  following result.
\begin{theorem}\label{th2}
For any unitary irreducible representation $\pi_0$  of $\mathbb{R}^{2}\rtimes M_2$ the following hold.  
\begin{enumerate}
\item There is  a realization of $\pi_0$ on  a Hilbert space of functions  $\mathcal{H}$.
\item There is a dense $(\mathbb{R}^{2}\rtimes \mathfrak{m}_2)$-invariant subspace $\mathcal{H}^{\infty}$ of $\mathcal{H}$. 
\item There is a family  (or a sequence) of representations $\{d\pi_{\epsilon}:\mathfrak{gl}_2(\R)\longrightarrow \operatorname{End}(\mathcal{H}^{\infty}) \}_{\epsilon\in \R\neq 0}$ 
that  strongly contract to  $d\pi_{0}:\mathbb{R}^{2}\rtimes \mathfrak{m}_2\longrightarrow \operatorname{End}(\mathcal{H}^{\infty})$.
\end{enumerate}
\end{theorem}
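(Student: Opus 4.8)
The plan is to prove Theorem \ref{th2} by pure assembly, treating the genuine analytic content as already discharged in the preceding propositions. First I would invoke Lemma \ref{lem2}: since it exhibits exactly one representative from each equivalence class of unitary irreducible representations of $\mathbb{R}^{2}\rtimes M_2$, an arbitrary such $\pi_0$ is unitarily equivalent to one of $\eta_{(0,0)}^{\lambda,\sigma}$, $\eta_{(0,0)}$, $\eta_{(1,0)}^{\lambda}$, or $\eta^{\lambda,\sigma}_{(0,\beta)}$. Because assertions (1)--(3) only require the existence of \emph{some} realization with the stated properties, it suffices to exhibit $\mathcal{H}$, $\mathcal{H}^{\infty}$ and the contracting family for the representative to which $\pi_0$ is equivalent; I would then take that representative's realization as the asserted realization of $\pi_0$.

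Next I would run through the three orbit types of $M_2$ on $\widehat{\R^2}$ and read off the data from Section \ref{s4} together with the matching proposition. For the orbit of $\chi_{(0,0)}$, Proposition \ref{p1} handles $\eta_{(0,0)}^{\lambda,\sigma}$ with $\mathcal{H}=\mathcal{H}^{\infty}=\C$ and the constant family $d\pi_{\lambda}$, while the proposition following it handles $\eta_{(0,0)}$ through the smooth vectors of a unitary principal series of $GL_2(\R)$, the target being produced only up to the explicit isomorphism recorded there. For the orbit of $\chi_{(1,0)}$, Proposition \ref{pro3} supplies the sequence $dD^{n,\lambda}$ for $\lambda\neq 0$, and the subsequent proposition supplies $dD^{n,1/n}$ for $\lambda=0$, both on $\mathcal{H}=L^2(\R^*,\tfrac{dx}{|x|})$ with $\mathcal{H}^{\infty}=C^{\infty}_c(\R^*,\tfrac{dx}{|x|})$. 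For the orbit of $\chi_{(0,\beta)}$ with $\beta\neq 0$, Proposition \ref{p5} gives the family $d\widetilde{\mathcal{P}}^{\sigma,\beta/\epsilon,\lambda+\beta/\epsilon}$ on $\mathcal{H}=L^2(\R,dx)$ with $\mathcal{H}^{\infty}=C^{\infty}_c(\R,dx)$. In each case I would remark that $\mathcal{H}^{\infty}$ is dense (standard density of $C^{\infty}_c$ in $L^2$, or Thm.~3.15 of \cite{Knapp} for the smooth vectors) and is preserved both by $d\pi_0$ and by every operator of the contracting family, since the explicit formulas involve only multiplication by $x^{\pm1}$, multiplication by constants, and differentiation, each of which maps $C^{\infty}_c(\R^*)$ (respectively $C^{\infty}_c(\R)$) into itself.

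Since the orbits $\chi_{(0,0)}$, $\chi_{(1,0)}$ and $\chi_{(0,\beta)}$ ($\beta\neq 0$) exhaust the $M_2$-action on $\widehat{\R^2}$, these cases cover every representative of Lemma \ref{lem2}, and the theorem follows. The real obstacle lies not in this bookkeeping but inside the cited propositions, where the parameters of the $GL_2(\R)$ families must be tuned so that the Lie-algebra limits land exactly on $d\pi_0$. The two delicate points are: in the $\chi_{(1,0)}$ orbit, producing the singular first-order operator $e_{21}\mapsto i/x$ as the $\epsilon_n$-scaled limit of the second-order discrete-series operator $-i\tfrac{n^2-1}{4x}+ix\partial_{xx}$, which forces the quadratic scaling $\epsilon_n=-4\lambda/n^2$ (and the cubic scaling in the $\lambda=0$ case); and in the $\chi_{(0,\beta)}$ orbit, letting $\nu=\beta/\epsilon$ and $\mu=\lambda+\beta/\epsilon$ diverge in a coordinated way so that the divergent $i\nu,i\mu$ contributions cancel on $e_{11}$ yet, after multiplication by $\epsilon$, survive to yield the finite operators $-i\beta x$ and $i\beta$ on $e_{21}$ and $e_{22}$. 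In both situations, once pointwise convergence is secured the norm convergence reduces to Lebesgue dominated convergence via the differential-operator lemma established within the proof of Proposition \ref{p5}.
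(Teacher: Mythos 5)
Your proposal is correct and matches the paper's own proof, which is exactly this assembly: the paper derives Theorem \ref{th2} by "combining Lemma \ref{lem2} and Propositions \ref{p1}--\ref{p5}," i.e.\ classification of the unitary dual by orbit type plus the case-by-case contractions you cite. Your additional remarks on density and invariance of $\mathcal{H}^{\infty}$, and on where the analytic work actually resides, only make explicit what the paper leaves implicit.
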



\section*{References}
\bibliographystyle{vancouver}
\bibliography{references}


\end{document}